\journal{Computer-Aided Design}
\newtheorem{theorem}{Theorem}
\newproof{proof}{Proof}
\begin{document}

\begin{frontmatter}

\title{Geometric Rounding and Feature Separation in Meshes}

\author{Victor Milenkovic}
\ead{vjm@cs.miami.edu}
\address{Department of Computer Science, University of Miami,
Coral Gables, FL 33124-4245, USA}

\author{Elisha Sacks\corref{cor1}}
\ead{eps@purdue.edu}
\address{Computer Science Department, Purdue University,
  West Lafayette, IN 47907-2066, USA}

\cortext[cor1]{Correponding author}

\begin{abstract}
Geometric rounding of a mesh is the task of approximating its vertex coordinates
by floating point numbers while preserving mesh structure.  Geometric rounding
allows algorithms of computational geometry to interface with numerical
algorithms.  We present a practical geometric rounding algorithm for 3D triangle
meshes that preserves the topology of the mesh.  The basis of the algorithm is a
novel strategy: 1) modify the mesh to achieve a feature separation that prevents
topology changes when the coordinates change by the rounding unit; and 2) round
each vertex coordinate to the closest floating point number.  Feature separation
is also useful on its own, for example for satisfying minimum separation rules
in CAD models.  We demonstrate a robust, accurate implementation.
\end{abstract}

\begin{keyword}
geometric rounding \sep mesh simplification \sep robust computational geometry
\end{keyword}

\end{frontmatter}

\section{Introduction}

A common representation for a surface is a triangle mesh: a set of disjoint
triangles with shared vertices and edges.  Meshes are usually constructed from
basic elements (polyhedra, triangulated surfaces) through sequences of
operations (linear transformations, Booleans, offsets, sweeps).  Although the
vertices of the basic elements have floating point coordinates, the mesh
vertices can have much higher precision.  For example, the intersection point of
three triangles has thirteen times the precision of their vertices.  High
precision coordinates are incompatible with numerical codes that use floating
point arithmetic, such as finite element solvers.  Rewriting the software to use
extended precision arithmetic would be a huge effort and would entail an
unacceptable performance penalty.  Instead, the coordinates must be approximated
by floating point numbers.

One strategy is to construct meshes using floating point arithmetic, so the
coordinates are rounded as they are computed.  This strategy suffers from the
robustness problem: software failure or invalid output due to numerical error.
The problem arises because even a tiny numerical error can make a geometric
predicate have the incorrect sign, which in turn can invalidate the entire
algorithm.  For a controlled class of inputs, robustness can be ensured by
careful engineering of the software, but the problem tends to recur when new
domains are explored.

A solution to the robustness problem, called Exact Computational Geometry (ECG)
\cite{ecg}, is to represent geometry exactly and to evaluate predicates exactly.
ECG is efficient because most predicates can be evaluated exactly using floating
point arithmetic.  Mesh construction using ECG is common in computational
geometry research and plays a growing role in applications.  We use ECG to
implement Booleans, linear transformations, offsets, and Minkowski sums
\cite{sacks-milenkovic13b,kyung-sacks-milenkovic15}; the CGAL library
\cite{cgal} provides many ECG implementations.

Although ECG guarantees a correct mesh with exact vertex coordinates, it does
not provide a way to approximate the coordinates in floating point.  Replacing
the coordinates with the nearest floating point numbers can cause triangles to
intersect.  The task of constructing an acceptable approximation is called
\emph{geometric rounding}.  Prior work solves the problem in 2D
\cite{hp-isr-02,m-spr97,goodrich97snap}, but the 3D problem is open
(Sec.~\ref{s-prior}).

\paragraph*{Contribution}

We present a practical geometric rounding algorithm for 3D triangle meshes.  We
separate the close features then round the vertex coordinates.  The separation
step ensures that the rounding step preserves the topology of the mesh.

The features of a mesh are its vertices, edges, and triangles.  Two features are
disjoint if they share no vertices.  The mesh is $d$-separated if the distance
between every pair of disjoint features exceeds $d$.  Moving each vertex of a
$d$-separated mesh by a distance of at most $d/2$ cannot make any triangles
intersect.  If the vertex coordinates are bounded by $M$, rounding them to the
nearest floating point numbers moves a vertex at most $e=\sqrt{3}M\epsilon$ with
$\epsilon$ the rounding unit.  Thus, $2e$-separation followed by rounding
prevents triangle intersection, which in turn prevents changes in the topology
of the mesh.  In particular, channels cannot collapse and cells cannot merge.

We separate a mesh in three stages.  Modification removes short edges and skinny
triangles by means of mesh edits (Sec.~\ref{s-step1}).  Expansion iterates
vertex displacements that maximize the feature separation (Sec.~\ref{s-step2}).
Optimization takes a separated mesh as input and minimizes the total
displacement of the vertices from their original input values under the
separation constraints (Sec.~\ref{s-step3}).  Modification quickly separates
most close features, thereby accelerating the other stages.  Expansion and
optimization guarantee separation via a locally minimum displacement.

The mesh separation algorithm builds on prior work in mesh simplification and
improvement (Sec.~\ref{s-prior}).  Modification performs one type of
simplification and the other stages perform one type of improvement.  We provide
a novel form of mesh improvement that is useful in applications that suffer from
close features.  For example, we can remove ill-conditioned triangles from
finite element meshes.  Another important application of mesh separation is
computer-aided design, which typically requires a separation of $10^{-6}$ due to
manufacturing constraints.  Current software removes small features
heuristically, e.g. by merging close vertices, which can create triangle
intersections.  Our algorithm provides a safe, efficient alternative.

\paragraph*{Implementation}

We implement the geometric rounding algorithm robustly (Sec.~\ref{s-imp}).  We
use ECG for the computational geometry tasks, such as finding close features and
testing if triangles intersect.  We implement expansion and optimization via
linear programming, using adaptive scaling to solve accurately and efficiently.
We demonstrate the implementation on meshes of varying type, size, and vertex
precision (Sec.~\ref{s-results}).  We conclude with a discussion
(Sec.~\ref{s-con}).

\section{Prior work}\label{s-prior}

We survey prior work related to geometric rounding of 3D meshes.  Fortune
\cite{fortune99} rounds a mesh of size $n$ in a manner that increases the
complexity to $n^4$ in theory and in practice.  Fortune \cite{fortune97} gives a
practical rounding algorithm for plane-based polyhedra.  We
\cite{sacks-milenkovic13b} extend the algorithm to polyhedra in a mesh
representation.  Most output vertices have floating point coordinates, and the
highest precision output vertex is an intersection point of three triangles with
floating point coordinates.  Zhou et al \cite{zhou2016} present a heuristic form
of geometric rounding akin to our algorithm and show that it works on 99.95\% of
10,000 test meshes.  This paper improves on our prior algorithm in three ways:
the input can be any mesh, all the output vertices have floating point
coordinates, and the mesh topology is preserved.

There is extensive research \cite{cignoni98} on mesh simplification: accurately
approximating a mesh of small triangles by a smaller mesh of larger triangles.
The mesh edits in the modification stage of our algorithm come from this work.
Simplification reduces the number of close features as a side effect of reducing
the number of triangles and increasing their size, but removing all the close
features is not attempted.  Mesh untangling and improvement algorithms
\cite{freitag2000,knupp2001} improve a mesh by computing vertex displacements
via optimization.  Although we use a similar strategy, our objective functions
and optimization algorithms are novel.  Cheng, Dey, and Shewchuk \cite{cheng12}
improve Delaunay meshes with algorithms that remove some close features, notably
sliver tetrahedra.

\section{Modification}\label{s-step1}

The modification stage of the feature separation algorithm consists of a series
of mesh edits: edge contractions remove short edges and edge flips remove skinny
triangles.  These edits are used in 2D geometric rounding, in mesh
simplification, and elsewhere.  We perform every edit that satisfies its
preconditions and that does not create triangle intersections.  This strategy
separates most close features in a wide range of meshes (Sec.~\ref{s-results}).

An edge $th$ is short if $||h-t||<d$.  For $th$ to be contracted, it must be
incident on two triangles, $thv$ and $htw$ (Fig.~\ref{f-vv}a).  The triangles
that contain $t$ on their boundary form a surface whose boundary is a simple
loop $h,v,t_1,\ldots,t_m,w$; likewise for $h$ with $t,w,h_1,\ldots,h_n,v$.  If
the $t_i$ and the $h_j$ are disjoint sets, $th$ is contracted: it is replaced
with a new vertex $m=(t+h)/2$ and the incident edges and triangles are updated
(Fig.~\ref{f-vv}b).

\begin{figure}[htbp]
  \begin{tabular}{cc}
    \includegraphics[scale=.75]{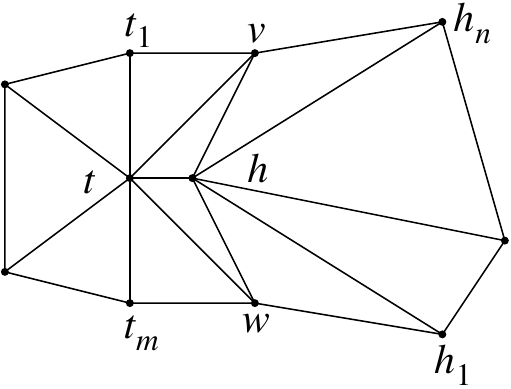} & \includegraphics[scale=.75]{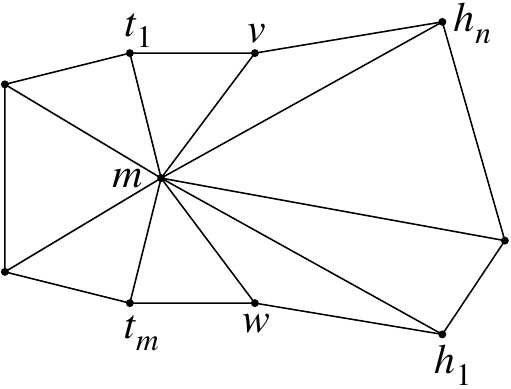}\\
    (a) & (b)
  \end{tabular}
  \caption{(a) Short edge $th$; (b) $th$ contraction.}\label{f-vv}
\end{figure}

A triangle $thv$ is skinny if $v$ projects onto a point $p$ in $th$ and
$||p-v||<d$ (Fig.~\ref{f-ve}a).  The edge $th$ is flipped if it is incident on
two triangles, $thv$ and $htw$, $vw$ is not an edge of the mesh, and the
triangles $vwh$ and $wvt$ are not skinny.  The flip replaces $th$ with $vw$, and
replaces $thv$ and $htw$ with $vwh$ and $wvt$ (Fig.~\ref{f-ve}b).

\begin{figure}[htbp]
  \begin{tabular}{cc}
    \includegraphics[scale=.75]{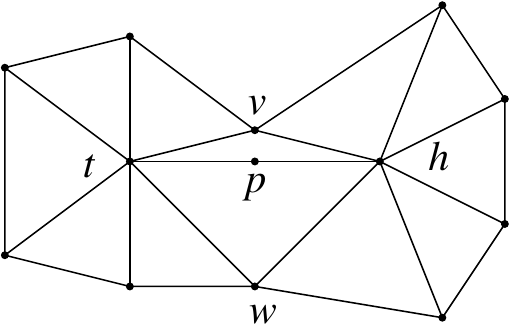} & \includegraphics[scale=.75]{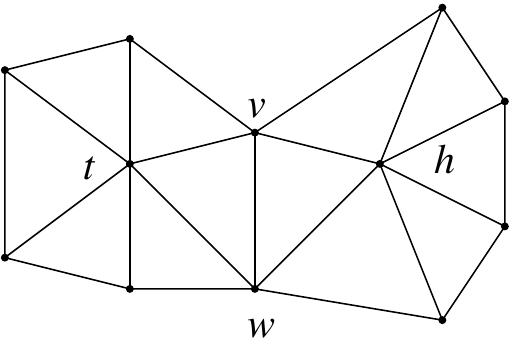}\\
    (a) & (b)
  \end{tabular}
  \caption{(a) Skinny triangle $thv$; (b) $th$ flip.}\label{f-ve}
\end{figure}

\subsection*{Analysis}

Modification terminates because every contraction reduces the number of edges
and every flip maintains this number and reduces the number of skinny triangles.
The computational complexity of an edit is constant, except for triangle
intersection testing, which is linear in the number of mesh triangles.  We
cannot bound the number of edits in terms of the number of short edges and
skinny triangles in the input because edge contractions can create skinny
triangles.

Edits preserve the intrinsic topology of the mesh because they replace a
topological disk by another topological disk with the same boundary.  The
boundary is $t_1,\ldots,t_m,w,h_1,\ldots,h_n,v$ for a short edge and is $vtwh$
for a skinny triangle.  This property and the rejection of edits that cause
triangle intersections jointly preserve the extrinsic topology of the connected
components of the mesh.  Components whose volume or thickness is less than $d$
can change their nesting order.  We remove these components because applications
find them useless at best.

We measure error by the distance between the input and the output meshes.  An
edit deforms its topological disk by at most $d$, so the error due to
modification is bounded by $d$ times the number of edits.  Although we lack a
bound in terms of the input, the median error is at most $d$ in our tests
(Sec.~\ref{s-results}).

\section{Expansion}\label{s-step2}

The expansion stage of the feature separation algorithm iteratively increases
the separation until the mesh is $d$-separated.  Each iteration assigns every
vertex $a$ a displacement $a'$ and a new position $a+a'$ that maximize the mesh
separation to first order.  The coordinate displacements are bounded by
$\Delta=d$ to control the truncation error.  The iterator verifies that the true
separation of the mesh increases and that the topology is preserved.  If not,
its halves $\Delta$ and tries again.

The mesh separation is the minimum distance between a vertex and a triangle or
between two edges.  The distance between features $A$ and $B$ is the maximum
over unit vectors $u$ of the minimum of $u\cdot(b-a)$ over the vertices $a\in A$
and $b\in B$.  The optimal $u$ is parallel to $q-p$ with $p$ and $q$ the closest
points of $A$ and $B$ (Fig.~\ref{f-close}).  We approximate the distance between
the displaced features by a linear function of the displacements.  The first
order displacement of $u$ is expressible as $lv+mw$ with $u$, $v$, and $w$
orthonormal vectors.  We obtain the first order distance
\begin{equation}\label{e-d}
\min_{a\in A, b\in B} u\cdot(b-a)+u\cdot(b'-a')+(lv+mw)\cdot(b-a)
\end{equation}
by substituting the displaced values into $u\cdot(b-a)$ and dropping the
quadratic terms.  Although $l$ and $m$ are determined by the vertex
displacements in the standard first order distance formula, we treat them as
optimization variables.  We explain this decision below (Sec.~\ref{s-con}).

\begin{figure}[htbp]
  \begin{tabular}{cc}
    \includegraphics{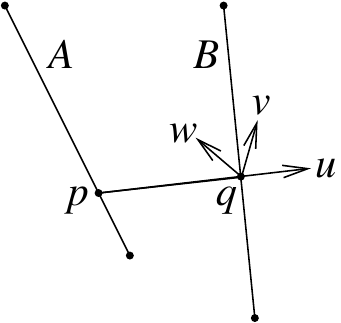} & \includegraphics{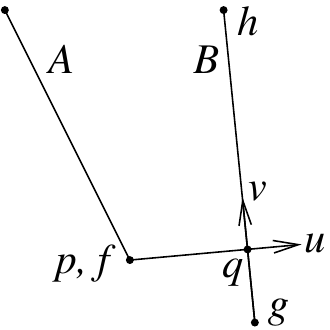}\\
    (a) & (b)
  \end{tabular}
  \caption{Close features $A$ and $B$ with closest points $p$ and $q$: (a) two
    edges, (b) vertex and triangle.} \label{f-close}
\end{figure}

We compute the displacements by solving two linear programs (LPs).  The first LP
maximizes the first order distance between the closest pair of features subject
to the $\Delta$ bounds.  It constrains the first order distance between every
pair to exceed $sd$ with $0\leq s\leq1$ a variable that it maximimizes.  Since
$s=1$ implies $d$-separation, a larger value would increase the error for no
reason.  The second LP computes minimal displacements that achieve the maximal
$s$ value.

We need to constrain the features whose distance is less than $d$, so the LPs
will $d$-separate them, and the features whose distance is slightly greater, so
the LPs will not undo their $d$-separation or make them intersect.  We achieve
these goals, while avoiding unnecessary constraints, by constraining the
features whose distance is less than $2\sqrt{3}d$.  The LPs cannot cause an
intersection between an unconstrained pair because of the $\Delta$ bounds.  They
can undo its $d$-separation, perhaps causing an extra iteration, but this never
happens in our tests.

The two LPs appear below.  The position of a vertex $a$ is a constant, also
denoted by $a$, and its displacement $a'$ defines three variables $a'_{x,y,z}$
(shorthand for $a'_x,a'_y,a'_z$).  Other variables and constants are defined
above.  The first LP computes a maximal $s=s_\mathrm{m}$.  The second LP
computes minimal displacements that achieve $s=s_\mathrm{m}$.  The magnitude of
the displacement of a vertex $a$ is represented by variables
$a^{\mathrm{m}}_{x,y,z}$ with constraints $-a^{\mathrm{m}}_{x,y,z}\leq
a'_{x,y,z} \leq a^{\mathrm{m}}_{x,y,z}$.  The objective is to minimize the sum
over the vertices of these variables.  The variable $s$ is replaced by the
constant $s_\mathrm{m}$ in the constraints.

\begin{flushleft}
\begin{minipage}{\columnwidth}
\centerline{First Expansion LP}
\textbf{Constants:} $d$, $\Delta$, the vertices, and the vectors $u$, $v$,
and $w$ for each pair of features.

\textbf{Variables:} $s$; $a'_{x,y,z}$ for each vertex $a$; $l$ and $m$ for
each pair of features.

\textbf{Objective:} maximize $s$.

\textbf{Constraints:} $0\leq s\leq 1$, $-\Delta\leq a'_{x,y,z}\leq\Delta$, and
$\mathrm{Eq.~(\ref{e-d})}\geq sd$.
\end{minipage}
\end{flushleft}

\begin{flushleft}
\begin{minipage}{\columnwidth}
\centerline{Second Expansion LP}
\textbf{Constants:} $s_m$, $d$, $\Delta$, the vertices, the vectors $u$,
$v$, and $w$ for each pair of features. 

\textbf{Variables:} $a'_{x,y,z}$ and $a^m_{x,y,z}$ for each vertex $a$; $l$
and $m$ for each pair of features.

\textbf{Objective:} minimize $\sum_a a^m_x+a^m_y+a^m_z$.

\textbf{Constraints:} $-\Delta\leq a'_{x,y,z}\leq\Delta$,
$-a^{\mathrm{m}}_{x,y,z}\leq a'_{x,y,z} \leq a^{\mathrm{m}}_{x,y,z}$, and
$\mathrm{Eq.~(\ref{e-d})}\geq d$.
\end{minipage}
\end{flushleft}

\subsection*{Analysis}

We prove that expansion terminates.  The first LP computes an $s>0$ because no
triangles intersect.  The displacements that it computes are feasible for the
second LP.  Hence, every expansion step succeeds for some $\Delta>0$.  It
remains to bound the number of steps.  Although we only prove a weak bound, the
number of steps appears to be a small constant (usually 1) based on extensive
testing (Sec.~\ref{s-results}).

We employ the following definitions.  A $\Delta$-displacement is a displacement
in which the coordinate displacements are bounded by $\Delta$.  Let $\delta$
denote the minimum separation of the mesh.  The \emph{tightness} $\tau$ as the
minimum over all $\Delta$-displacements of the ratio of $\Delta$ to the increase
in $\delta$.

\begin{theorem}
  The number of expansion steps is bounded.
\end{theorem}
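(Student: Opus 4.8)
The plan is to produce a constant $c>0$, depending only on the modified mesh, such that every accepted expansion step raises the mesh separation $\delta$ by at least $c$. Since the iterator accepts a step only after checking that $\delta$ strictly increases, the accepted steps make $\delta$ monotone; it starts at the positive separation $\delta_0$ of the modified mesh and the process halts as soon as $\delta\geq d$. Hence the number of steps is at most $\lceil (d-\delta_0)/c\rceil$.

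To bound $c$ from below I would first bound from below the step size $\Delta$ that is actually used. The iterator halves $\Delta$ only when a trial step fails, so it suffices to exhibit a threshold $\Delta^\star>0$ below which every step succeeds; the $\Delta$ finally used is then at least $\Delta^\star/2$. Two ingredients give $\Delta^\star$. First, displacing every vertex of the current $\delta$-separated mesh by less than $\delta/2$ creates no triangle intersection and therefore preserves the topology, exactly as argued in Section~\ref{s-prior}; since a $\Delta$-displacement moves a vertex by at most $\sqrt3\,\Delta$, requiring $\Delta<\delta/(2\sqrt3)$ settles topology, and $\delta\geq\delta_0$ makes this requirement uniform over the run. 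Second, Expression~(\ref{e-d}) differs from the true feature distance by an $O(\Delta^2)$ term whose constant is controlled by the vertex coordinates, which are bounded by $M$; taking $\Delta$ small enough that this quadratic error is dominated by the first order improvement delivered by the first Expansion LP forces the true separation to increase as well.

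It remains to convert the lower bound on $\Delta$ into a lower bound on the per-step gain, which is where the tightness $\tau$ enters. Before a step the closest constrained pair and the globally closest pair coincide, since unconstrained pairs are more than $2\sqrt3 d$ apart and (for $\Delta$ small) stay so; thus the first Expansion LP maximizes, to first order, the new value of $\delta$ over all $\Delta$-displacements, and by the definition of $\tau$ this maximum is $\delta+\Delta/\tau$ (capped when it reaches $d$, in which case the step finishes). Hence the LP's displacement raises $\delta$, to first order, by $\Delta/\tau$, and for $\Delta<\Delta^\star$ the true gain is at least $\Delta/\tau$ minus its quadratic correction, say $\Delta/(2\tau)$. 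Combining with $\Delta\geq\Delta^\star/2$ gives $c$ of order $\Delta^\star/\tau$; substituting the estimate $\Delta^\star=\Theta(1/\tau)$ coming from the quadratic error bound yields a bound on the number of steps that grows roughly like $\tau^2(d-\delta_0)$.

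The main obstacle is uniformity: both $\tau$ and the constant in the $O(\Delta^2)$ error depend on the evolving mesh, so one must argue that bounded coordinates, a separation bounded below by $\delta_0$, and a bounded number of features keep $\tau$ finite throughout the run rather than only at each individual step. This is exactly why the resulting bound is weak; in practice a single step almost always suffices, as reported in Section~\ref{s-results}.
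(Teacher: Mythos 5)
Your proof uses the same two key ingredients as the paper's --- the tightness $\tau$ to convert a step size into a first-order separation gain, and the quadratic truncation bound applied twice (once to pass from the true optimum to the LP optimum, once to pass from the LP's linear separation back to the true separation) --- so the core mechanism is the same. Where you genuinely diverge is in how progress is accumulated. You fix a single threshold $\Delta^\star$ valid for the whole run by anchoring everything to the initial separation $\delta_0$, which gives a uniform \emph{additive} gain per step and a count of roughly $\tau^2(d-\delta_0)/\delta_0$ steps. The paper instead chooses the step size proportional to the \emph{current} separation, $\Delta=\delta/(4c\tau)$, so each step multiplies $\delta$ by $1+1/(8c\tau^2)$ and the count is $\mathrm{O}(\tau^2\log(d/\delta))$ --- exponentially better in the ratio $d/\delta_0$, which matters since $\delta_0$ can be many orders of magnitude below $d$ for high-precision inputs. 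Both arguments establish boundedness, so the theorem as stated is proved either way. Two smaller points: your phrase ``by the definition of $\tau$ this maximum is $\delta+\Delta/\tau$'' conflates the first-order maximum with the true one ($\tau$ is defined via the true increase); the repair is exactly the $S\geq S^\ast-T$, true $\geq S^\ast-2T$ chain, which you in effect invoke a sentence later, so this is loose wording rather than an error. And the uniformity of $\tau$ that you flag as the main obstacle is not resolved in the paper either --- it is discharged there by the explicit, unproven assumption that a separation-increasing displacement does not increase $M/\delta$, giving $\tau<M/\delta\leq M/\delta_0$ throughout; you would need the same (or an equivalent) assumption to close your version.
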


\begin{proof}
A $\Delta$-displacement can increase the separation of a pair by at most
$2\sqrt{3}\Delta$, so $\tau>\sqrt{3}/6$.  Let $M$ be the maximum vertex
coordinate magnitude.  The $\Delta$-displacement $a'=\Delta a/M$ increases
$\delta$ by $\Delta\delta/M$, so $\tau<\Delta/(\Delta\delta/M)=M/\delta$.  We
assume that a displacement that increases the separation of the mesh does not
increase $M/\delta$; it does not decrease $\delta$ by definition.  Hence,
$\tau<M/\delta$ at every expansion step.

The LP maximizes $S=sd$ for the linear separation constraints.  Let $S^\ast$ be
the maximum of $S$ for the true constraints and let $T$ be the maximum
truncation error of the linear constraints.  The linear separations for the
optimal displacement are at least $S^\ast-T$.  Therefore, the LP solution
satisfies $S\geq S^\ast-T$ and the true separation is at least $S-T\geq
S^\ast-2T$.

For a $\Delta$-displacement with $\Delta=\delta/k$, $T<c\delta/k^2$ for some
constant $c$.  For $k=4c\tau$, the optimal $\Delta$-step increases $\delta$ by
$\delta/k\tau=\delta/4c\tau^2$, so the LP increases $\delta$ by
$\delta/4c\tau^2-2c\delta/k^2=\delta/8c\tau^2$.  Since one step multiplies
$\delta$ by $1+1/8c\tau^2$, ${\mathrm{O}}(\tau^2\log(d/\delta))$ steps increase
it to $d$.
\end{proof}

\section{Optimization}\label{s-step3}

The optimization stage of the feature separation algorithm takes the output of
the expansion stage as input and locally minimizes the displacement of the
vertices from their original positions.  (Expansion does not minimize the
displacement even though each of its iterations is approximately optimal.)  We
apply a gradient descent strategy to the semi-algebraic set of vertex values for
which the mesh is $d$-separated.  We compute the direction in which the
displacement decreases most rapidly subject to the linear separation
constraints.  We take a step of size $\beta=d$ in this direction and use the
expansion algorithm to correct for the linearization error and return to the
$d$-separated space.  If the displacement increases, we divide $\beta$ by 2 and
retry the step. If four consecutive steps succeed, we double $\beta$.

We compute the descent direction by solving an LP.  We represent the
pre-expansion value of a vertex $a$ by constants $a^0_{x,y,z}$ and represent its
displacement by variables $a^{\mathrm{d}}_{x,y,z}$ with constraints
$-a^{\mathrm{d}}_{x,y,z}\leq a_{x,y,z}+a'_{x,y,z}-a^0_{x,y,z}\leq
a^{\mathrm{d}}_{x,y,z}$ and $-\beta\leq a'_{x,y,z}\leq\beta$.  The objective is
to minimize the sum over the vertices of $a^{\mathrm{d}}_{x,y,z}$.  We drop $s$
from the constraints because they are feasible with $s=1$ due to expansion.

\begin{flushleft}
\begin{minipage}{\columnwidth}
\centerline{Optimization LP}
\textbf{Constants:} $\beta$, $d$, the vertices, the pre-expansion vertices
$a^0$, and the vectors $u$, $v$, and $w$.

\textbf{Variables:} the $a'_{x,y,z}$, $l$, $m$, and $a^d_{x,y,z}$.

\textbf{Objective:} minimize $\sum_a a^d_x+a^d_y+a^d_z$.

\textbf{Constraints:} $-\beta\leq a'_{x,y,z}\leq\beta$,
$-a^{\mathrm{d}}_{x,y,z}\leq a_{x,y,z}+a'_{x,y,z}-a^0_{x,y,z}\leq a^{\mathrm{d}}_{x,y,z}$,
and $\mathrm{Eq.~(\ref{e-d})}\geq d$.
\end{minipage}
\end{flushleft}

\section{Implementation}\label{s-imp}

\paragraph*{Modification}

We accelerate the intersection tests by storing the mesh in an octree.  We
perform mesh edits in an order that reduces the number of tests.  Contractions
come before flips because removing a short edge also eliminates two skinny
triangles.  We perform edits of the same type in order of feature distance.
When we contract an edge of length $x$, we need only test for intersections
between the new triangles and old triangles within distance $x/2$.  When we
remove a skinny triangle of size $x$, we need only test for intersections
between the two new triangles and old triangles within distance $x$.  In both
cases, prior edits have eliminated most such old triangles.

\paragraph*{Expansion}

We combine the two LPs into one LP that achieves similar results in half the
time.  We take the constants, variables, and constraints from the first LP and
add the $a^m$ and their constraints.  We maximize $s-bn\sum_a a^m_x+a^m_y+a^m_z$
with $b$ a large constant and with $n$ the number of vertices.

\paragraph*{Linear programs}

We use the IBM CPLEX solver.  The input size is not a concern because the number
of constraints is proportional to the number of close features.  The challenge
is to formulate a well-conditioned LP.  We scale the displacement variables by
$d$ because their values are small multiples of $d$.  We divide the separation
constraints by $d$ to avoid tiny coefficients.  We bound the magnitudes of $l$
and $m$ by 0.001 and scale them by a parameter $\alpha$ that is initialized to
$1$.  After solving an LP, we check if the solution violates any of the
constraints by over $10^{-6}$, meaning that the separation is off by over
$10^{-6}d$.  If so, we multiply $\alpha$ by 10 and resolve the LP.  This
strategy trades off accuracy for number of iterations when scaling is poor.

We test every constrained pair of features $A$ and $B$ for intersection.  If
$(u+lv+mw)\cdot(b+b'-a-a')>0$ for all vertices $a\in A$ and $b\in B$, the pair
does not intersect.  Otherwise, we express the displacement of each vertex $v$
as a linear transform $v+v't$ and test if $A$ and $B$ are in contact at any
$t\in[0,1]$.  The vertices of $A$ and $B$ must be collinear, which occurs at the
zeros of a cubic polynomial, and the transformed features must intersect.  The
conservative test, is much faster than the exact test and rarely rejects a good
step.

\section{Results}\label{s-results}

We validated the feature separation algorithm on five types of meshes: open
surfaces, three types of closed manifolds with increasing vertex complexity, and
tetrahedral meshes.  We describe the results for $d=10^{-6}$ because this is the
typical minimum feature size in CAD software.  Decreasing $d$ decreases the
number of close features and hence the running time roughly proportionally.  The
running times are for one core of an Intel Core i7-6700 CPU at 3.40GHz with 16
GB of RAM.

\subsection{Isosurfaces}

The inputs are isosurfaces with 53~bit vertex coordinates that are constructed
by a marching cubes algorithm, and simplified isosurfaces (Fig.~\ref{f-mesh}).
Table~\ref{t-simp} shows the results for four isosurfaces with 35,000 to 125,000
triangles and for two simplified isosurfaces with 1,500 and 2,500 triangles.
Feature separation modifies 0.0\% of the vertices of the isosurfaces; it
modifies median 1.4\% of the vertices of the simplified isosurfaces with median
error $0.3d$.  The median and maximum running times per input triangle are 4e-5
and 0.002 seconds.

\begin{figure}[htbp]
  \begin{tabular}{cc}
    \includegraphics[scale=.165]{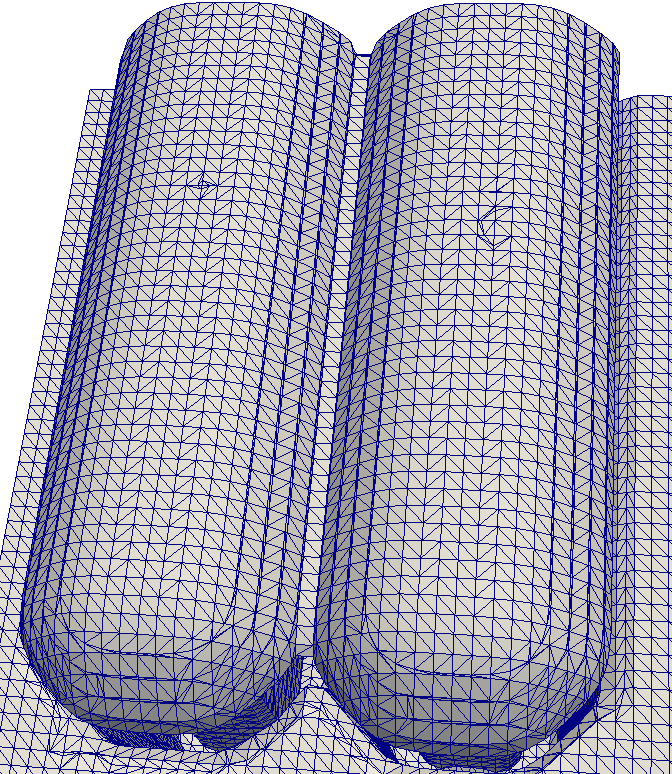} & \includegraphics[scale=.165]{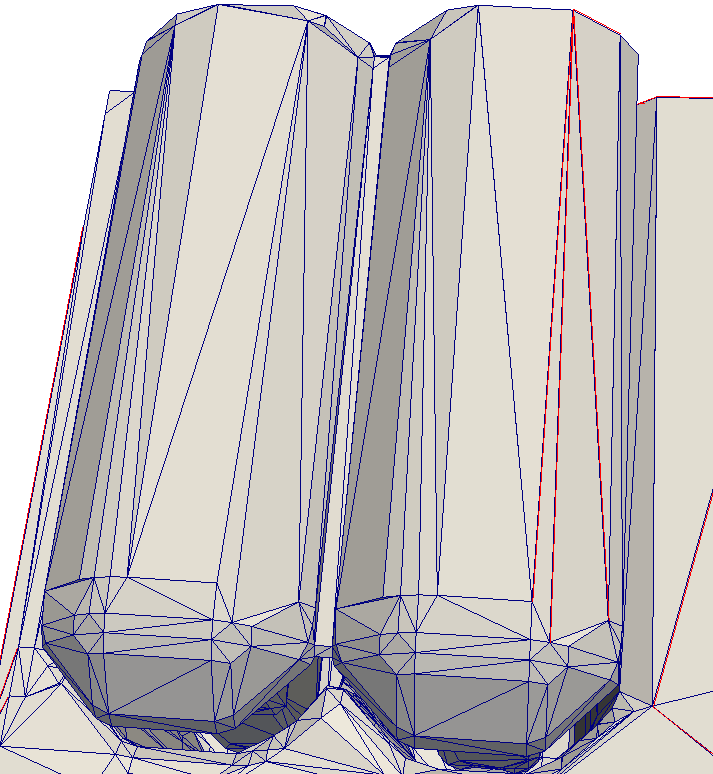}\\
    (a) & (b)
  \end{tabular}
  \caption{(a) Isosurface; (b) simplified mesh with close features in
    red.} \label{f-mesh}
\end{figure}

\begin{table*}[tbp]
  \caption{Simplification: $f$ triangles, $c_m$ $d$-close features, $v_m$
    percentage vertices displaced, $a_m$ median and $m_m$ max displacement in
    units of $d$ for modification, likewise $c_e,v_e,a_e,m_e$ for expansion and
    $v_o,a_o,m_o$ for optimization, and $t$ seconds running time.}\label{t-simp}
  \begin{tabular}{rccccccccccccc}
    sec. & $f$ & $c_m$ & $v_m$ & $a_m$ & $m_m$ & $c_e$ & $v_e$ & $a_e$ & $m_e$ & $v_o$ &
    $a_o$ & $m_o$ & $t$\\
    7.1 & 34876 & 36 & 0.00 & 0.00 & 0.00 & 8 & 0.02 & 0.00 & 0.00 & 0.02 & 0.00
    & 0.00 & 1.23\\ 
    & 35898 & 36 & 0.00 & 0.00 & 0.00 & 8 & 0.02 & 0.00 & 0.00 & 0.02 & 0.00 &
    0.00 & 1.26\\ 
    & 123534 & 2 & 0.00 & 0.00 & 0.00 & 0 & 0.00 & 0.00 & 0.00 & 0.00 & 0.00 &
    0.00 & 2.36\\ 
    & 125394 & 5 & 0.00 & 0.00 & 0.00 & 0 & 0.00 & 0.00 & 0.00 & 0.00 & 0.00 &
    0.00 & 2.37\\ 
    & 1596 & 139 & 0.37 & 0.56 & 0.65 & 11 & 0.50 & 0.18 & 0.30 & 0.50 & 0.17 &
    0.25 & 0.45\\ 
    & 2568 & 177 & 0.07 & 0.13 & 0.13 & 66 & 2.41 & 0.85 & 3.00 & 2.41 & 0.48 &
    1.25 & 4.33\\\hline 
    7.2 & 377534 & 5119 & 0.13 & 0.01 & 0.70 & 69 & 0.01 & 0.60 & 2.14 & 0.01 &
    0.46 & 2.24 & 15.58\\ 
    & 407014 & 18727 & 0.54 & 0.02 & 0.71 & 149 & 0.02 & 0.36 & 1.71 & 0.02 &
    0.32 & 1.45 & 22.45\\ 
    & 436308 & 23980 & 0.23 & 0.33 & 1.02 & 441 & 0.07 & 0.48 & 2.21 & 0.07 &
    0.41 & 1.57 & 211.00\\ 
    & 438364 & 291 & 0.00 & 0.40 & 0.83 & 11 & 0.00 & 0.36 & 0.73 & 0.00 & 0.36
    & 0.72 & 8.60\\ 
    & 641604 & 221 & 0.00 & 0.53 & 0.70 & 14 & 0.00 & 0.48 & 0.89 & 0.00 & 0.46
    & 0.86 & 11.83\\ 
    & 667838 & 329 & 0.00 & 0.54 & 0.94 & 13 & 0.00 & 0.51 & 1.03 & 0.00 & 0.47
    & 0.83 & 10.06\\ 
    & 773236 & 235 & 0.00 & 0.30 & 0.48 & 7 & 0.00 & 0.43 & 1.00 & 0.00 & 0.36 &
    0.98 & 14.62\\\hline 
    7.3 & 1572 & 1916 & 5.69 & 0.07 & 0.55 & 120 & 1.89 & 0.85 & 2.45 & 3.67 &
    0.74 & 2.24 & 5.32\\ 
    & 1612 & 2614 & 6.91 & 0.08 & 0.74 & 81 & 1.85 & 0.85 & 2.45 & 3.95 & 0.68 &
    1.98 & 3.83\\\hline
    7.4 & 465546 & 1363142 & 18.54 & 0.08 & 1.04 & 9250 & 1.44 & 0.85 & 3.01 &
    0.00 & 0.00 & 0.00 & 299.90\\ 
    & 473542 & 1506798 & 14.39 & 0.08 & 1.31 & 12186 & 2.34 & 0.59 & 3.05 & 0.00
    & 0.00 & 0.00 & 459.50\\\hline 
    7.5 & 125986 & 4 & 0.00 & 0.00 & 0.00 & 4 & 0.07 & 0.83 & 1.00 & 0.07 & 0.83
    & 2.83 & 14.59\\ 
    & 154084 & 0 & 0.00 & 0.00 & 0.00 & 0 & 0.00 & 0.00 & 0.00 & 0.00 & 0.00 &
    0.00 & 3.03
  \end{tabular}
\end{table*}

\subsection{Minkowski sums}

The inputs are Minkowski sums of pairs of polyhedra with 53~bit vertex
coordinates (Fig.~\ref{f-mink}).  We construct the Minkowski sums robustly using
our prior algorithm \cite{kyung-sacks-milenkovic15}, but without perturbing the
input.  The output vertex coordinates are ratios of degree~7 and degree~6
polynomials in the input coordinates, hence have about 800~bit precision.

\begin{figure}[htbp]
  \begin{tabular}{cc}
    \includegraphics[scale=.2]{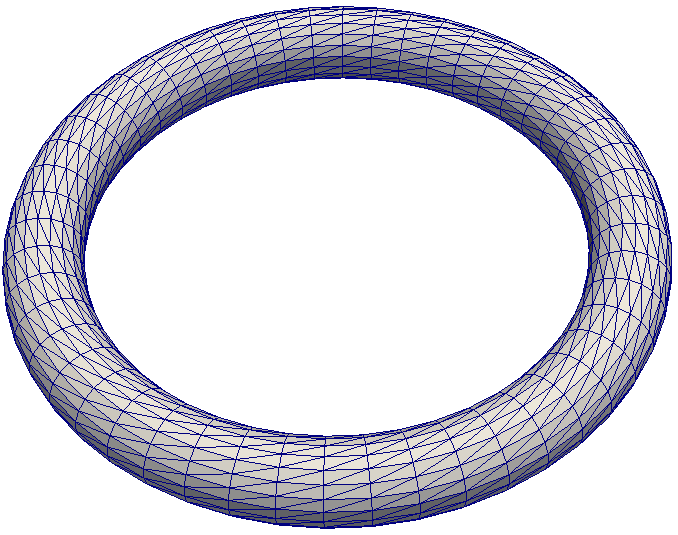} & \includegraphics[scale=.2]{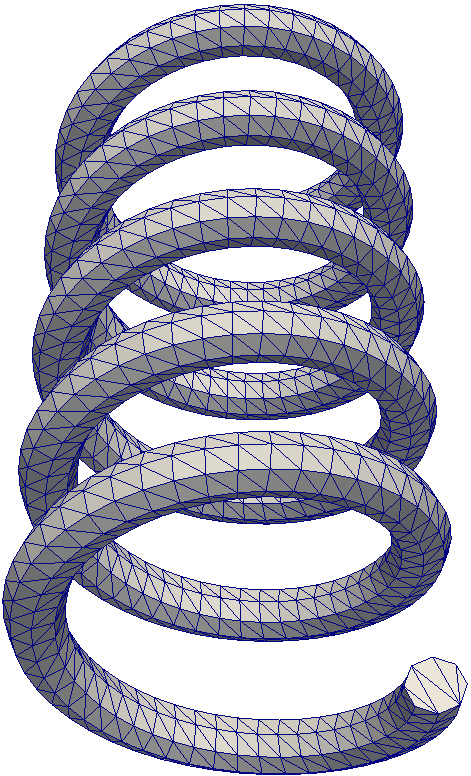}\\
    (a) & (b)\\
    \multicolumn{2}{c}{\includegraphics[scale=.2]{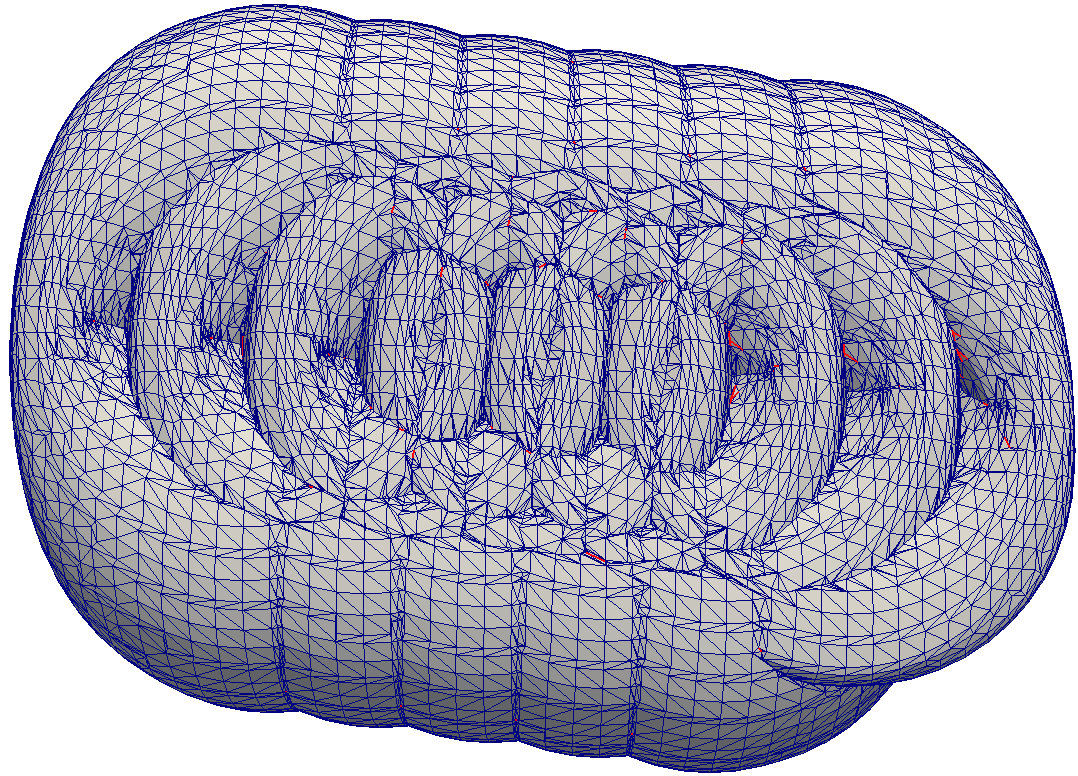}}\\
    (c)
  \end{tabular}
  \caption{(a) Torus, (b) helix, (c) Minkowski sum with close features in
    red.}\label{f-mink}
\end{figure}

We test 10 polyhedra with up to 40,000 triangles.  The 55 pairs have Minkowski
sums with median 32,000 and maximum 773,000 triangles, and median 80 and maximum
37,000 close features.  Modification reduces the number of close features to
median 0 and maximum 440, and displaces median 0.01\% and maximum 0.58\% of the
vertices with median and maximum displacements of $0.15d$ and $d$.  Expansion
displaces median 0.0\% and maximum 0.07\% of the vertices with median and
maximum displacements of $0.0d$ and $5.33d$.  Optimization does not help.  The
median and maximum running times per input triangle are $2\times10^{-5}$ and
$0.001$ seconds.  Table~\ref{t-simp} shows the results for the seven largest
inputs.

\subsection{Sweeps}

The inputs are approximate 4D free spaces for a polyhedral robot that rotates
around the $z$ axis and translates freely relative to a polyhedral obstacle.
The algorithm is as follows.  We split the rotation into short intervals.
Within an interval, we approximate the free space by the Minkowski sum of the
obstacle with the complement of the volume swept by the robot as it rotates
around the axis.  We employ a rational parameterization of the rotation matrix
with parameter $t$.  The coordinates of a rotated vertex are ratios of
polynomials that are quadratic in $t$ and linear in the input coordinates, for
total degree 6.  The highest precision vertices of the swept volume are
intersection points of three triangles comprised of rotated vertices.  The
degree of their coordinates is $13\times6=78$ and the input has 27~bit
precision, so the output precision is about 2100 bits.

We test a robot with 12 triangles and an obstacle with 64 triangles
(Fig.~\ref{f-drone}).  We use 40 equal rotation angle intervals.  The inputs
have 1,400 to 1,600 triangles, and median 1,600 maximum 3,600 close features.
Modification reduces the number of close features to median 70 maximum 140, and
displaces median 5.2\% and maximum 8.5\% of the vertices with median and maximum
displacements of $0.08d$ and $0.87d$.  Expansion displaces median 2.0\% and
maximum 2.2\% of the vertices with median and maximum displacements of $0.86d$
and $2.46d$, which optimization changes to $0.57d$ and $6.1d$.  The median and
maximum running times per input triangle are 0.004 and 0.02 seconds.
Table~\ref{t-simp} shows the results for the two largest inputs.

\begin{figure}[htbp]
  \begin{tabular}{cc}
    \includegraphics[scale=.25]{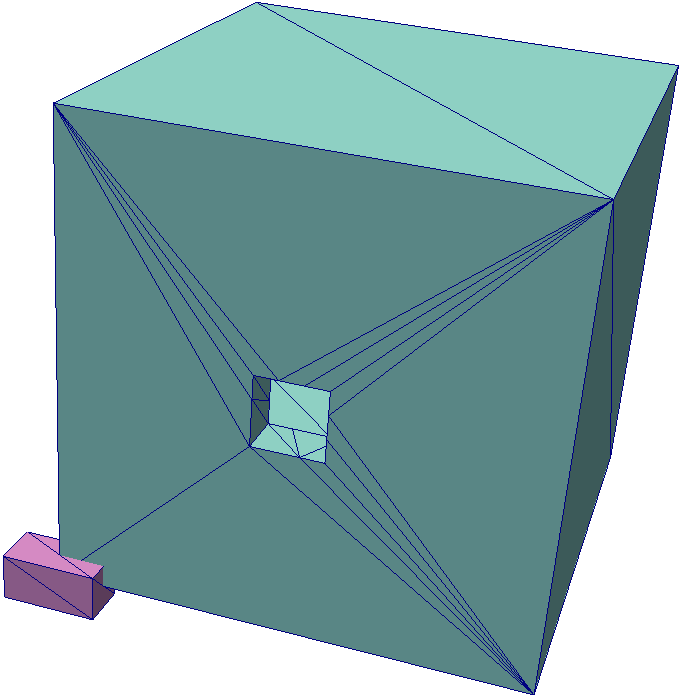} \\
    (a)\\
    \includegraphics[scale=.18]{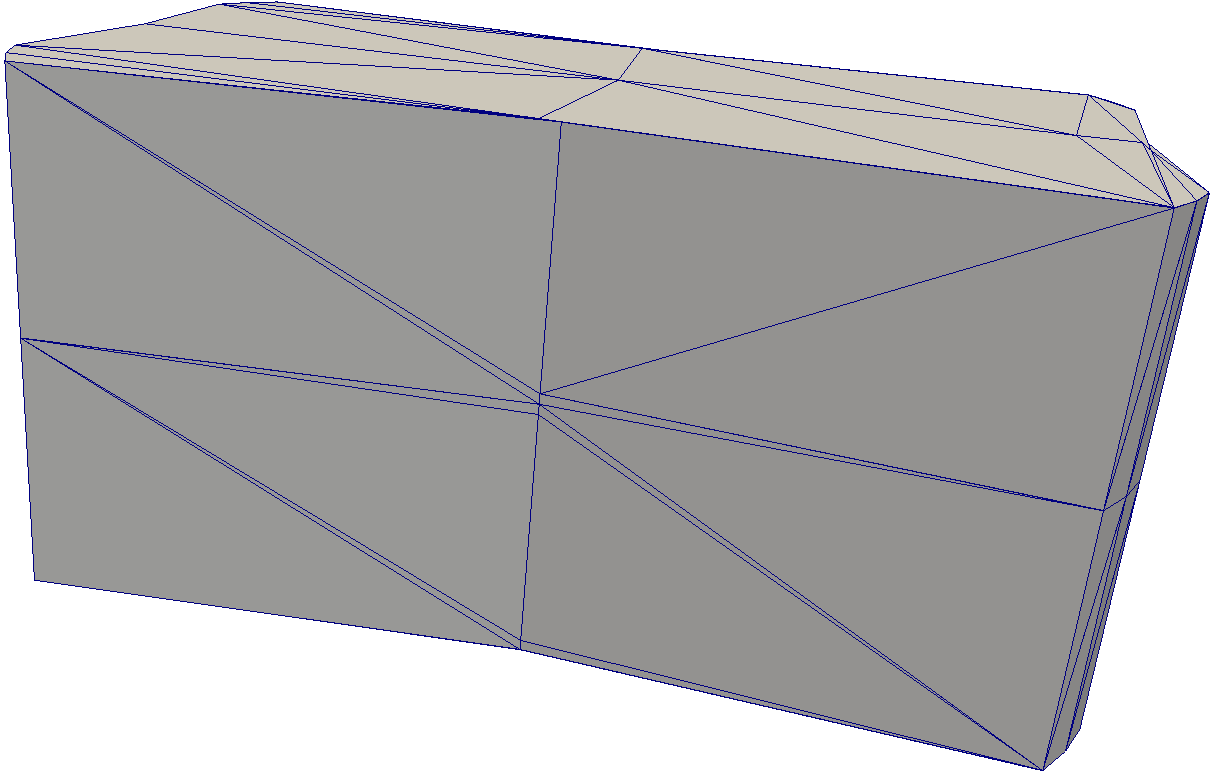}\\
    (b)\\
    \includegraphics[scale=.25]{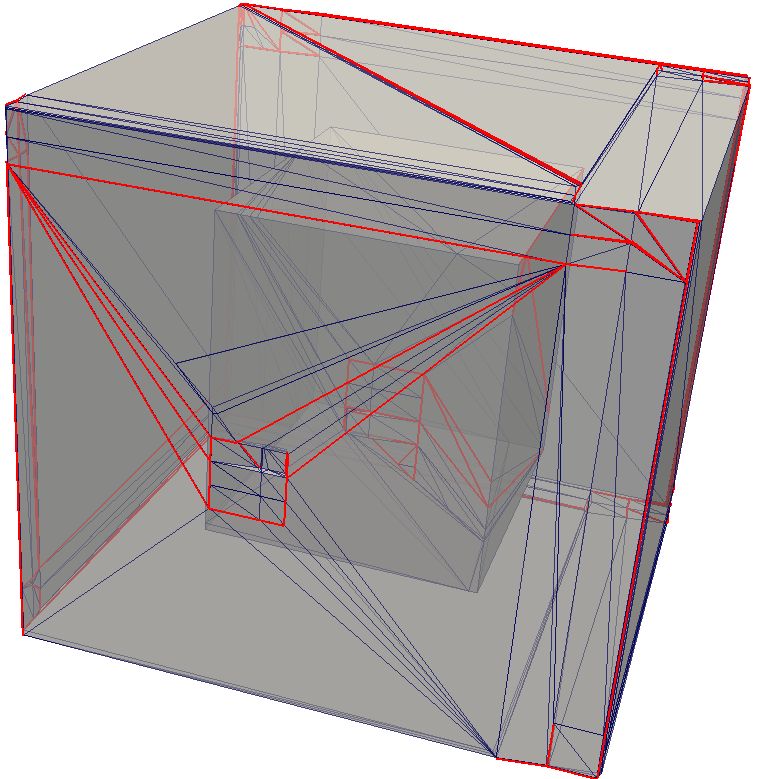}\\
    (c)
  \end{tabular}
  \caption{(a) Robot (purple) and obstacle (blue) with outer and inner cells.
    (b) Swept volume of robot over one angle interval (magnified).  (c)
    Approximate free space for this interval with close features in
    red.} \label{f-drone}
\end{figure}

\subsection{Free spaces}

The inputs are triangulated 3D free spaces for a polyhedral robot that rotates
around the $z$ axis and translates freely in the $xy$ plane relative to a
stationary polyhedron (Fig.~\ref{f-maze}), which we compute with our prior
algorithm \cite{sacks-butt-milenkovic17}.  The free space coordinates are
$(x,y,t)$ with $t$ a rational parameterization of the rotation angle.  The $t$
coordinates of the vertices are zeros of quartic polynomials whose coefficients
are polynomials in the coordinates of the input vertices.  Their algebraic
degree and precision are far higher than in the previous tests.

We test four robots with 4 to 42 triangles and four obstacles with 736 to 8068
triangles.  The 16 free spaces have median 125,000 and maximum 474,000
triangles, and median 530,000 and maximum 1,507,000 close features.
Modification reduces the number of close features to median 2,400 max 12,000,
and displaces median 17\% and maximum 26\% of the vertices with median and
maximum displacements of $0.07d$ and $1.31d$.  Expansion displaces median 1\%
and maximum 5\% of the vertices with median and maximum displacements of $0.8d$
and $8d$.  Optimization barely reduces the displacement and is slow, so we omit
it from the results.  The median and maximum running times per input triangle
are 0.001 and 0.002 seconds.  Table~\ref{t-simp} shows the results for the two
largest inputs.

\begin{figure}[htbp]
  \begin{tabular}{c}
    \includegraphics[scale=.2]{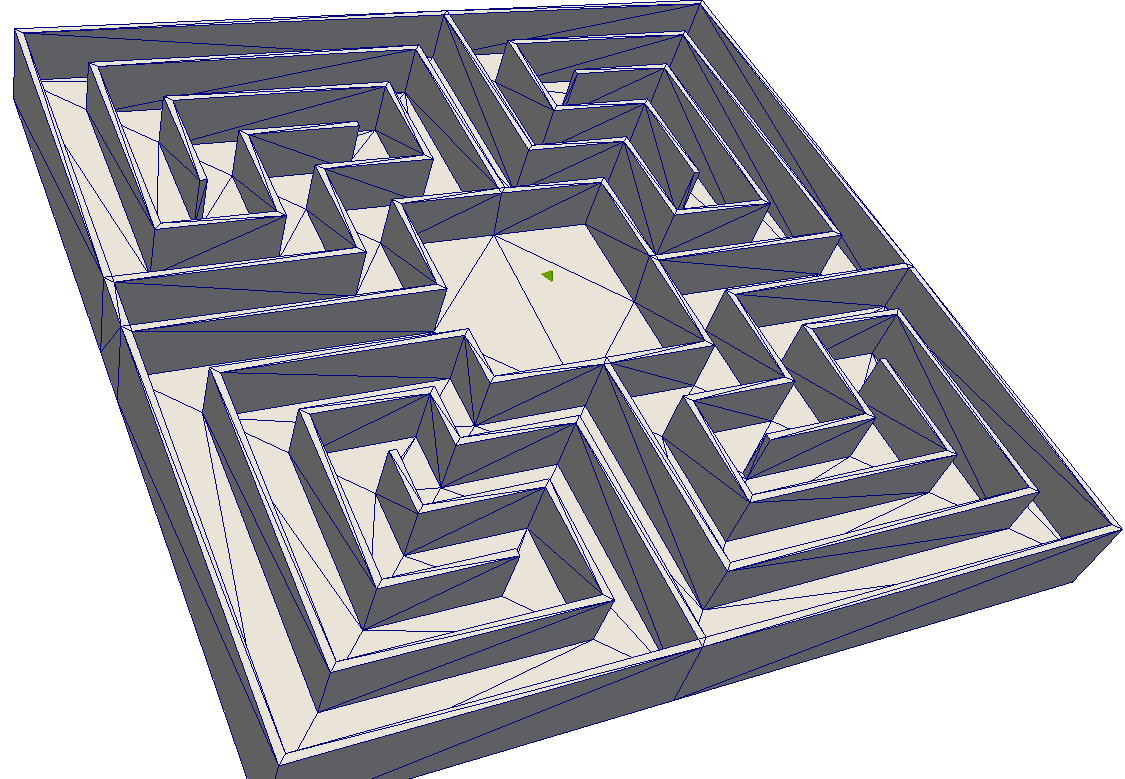}\\
    (a)\\
    \includegraphics[scale=.2]{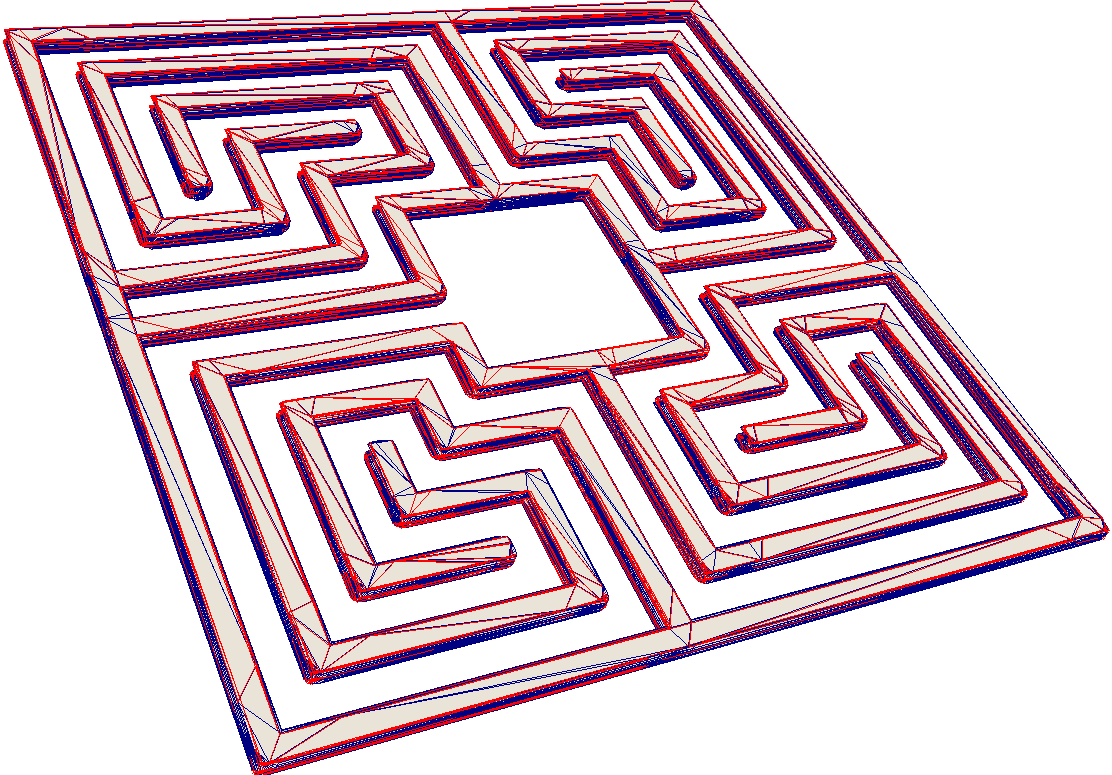}\\
    (b)
  \end{tabular}
  \caption{(a) Robot in maze; (b) configuration space with close features in
    red.}
  \label{f-maze}
\end{figure}

\begin{figure}[htbp]
  \centerline{\includegraphics[scale=.25]{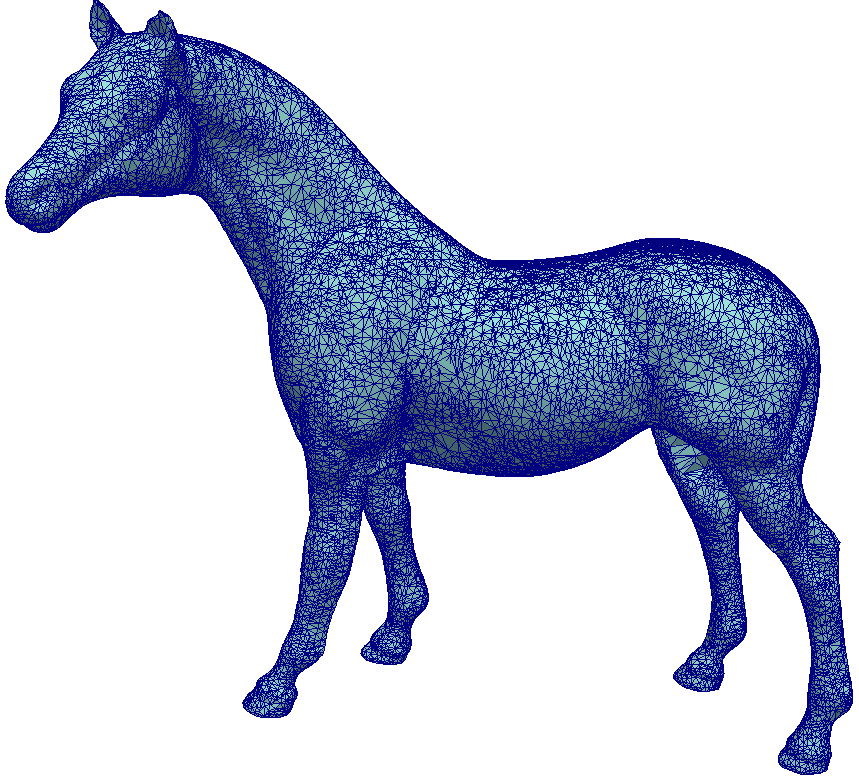}}
  \caption{Polyhedron whose tetrahedral mesh has 150,000 triangles.}
\label{f-horse}
\end{figure}

\subsection{Tetrahedral meshes}

The inputs are eight Delaunay tetrahedralizations with 3,000 to 150,000
triangles, and median 20 maximum 500 close features (Fig.~\ref{f-horse}).
Modification does nothing because no edge is incident on two triangles.
Expansion displaces median 4.9\% and maximum 30\% of the vertices with median
and maximum displacements of $0.75d$ and $3.96d$.  Optimization does not help.
The median and maximum running times per mesh triangle are $9\times10^{-5}$ and
$5\times10^{-4}$ seconds.  Table~\ref{t-simp} shows the results for the two
largest inputs.

\section{Conclusion}\label{s-con}

The feature separation algorithm performs well on a wide range of inputs.  The
isosurfaces and the tetrahedral meshes have few close features and 53 bit
precision.  The Minkowski sums have median 80 close features and 700 bit
precision.  The sweeps have median 1600 close features and 2100 bit precision.
The free spaces have median 530,000 close features and even higher precision.
In every case, the number of displaced vertices is proportional to the number of
close features, the median vertex displacement is bounded by $d$, and the
running time is proportional to the number of triangles plus close features with
a mild dependency on the input precision.

This performance corresponds to the assumption that every pair of close features
is far from every other pair.  The complexity of modification is linear in the
number of short edges and skinny triangles because mesh edits do not create
close features.  The error is bounded by $d$ because a vertex is displaced at
most once.  The expansion LP achieves the optimal displacement with
$\Delta=d$ because the truncation error is negligible.  Thus, the number of
steps is constant.  Optimization converges in a few steps because the initial
displacement is close to $d$.  A more realistic assumption is that every pair of
close features is close to a bounded number of close features.  A complexity
analysis under this assumption is a topic for future work.

We tested the benefit of modification by performing feature separation via
expansion alone.  The median vertex displacement increases slightly, the mesh
size increases proportionally to the number of short edges, and the running time
increases sharply.  To understand the small change in displacement, consider two
vertices that are $e$ apart.  Modification displaces them by $e/2$, versus
$(d-e)/2$ for expansion.  Assuming that $e$ is uniform on $[0,d]$, the average
displacement is the same.  The running time increases because mesh edits take
constant time, whereas expansion is quadratic in the number of close features.

We compared our linear distance function, which picks the $l$ and $m$ values, to
the standard function in which $l$ and $m$ are constants.  If $l$ and $m$ are
not bounded, the two formulations are equivalent.  (We omit the lengthy proof.)
The unbounded version is much slower that our version.  If $l$ and $m$ are
bounded by 1 instead of 0.001, little changes.  If they are set to zero, the
running time drops, the error grows, and in rare cases expansion does not
converge.

We conclude that modification plus expansion achieve feature separation quickly
and with a small displacement.  Optimization rarely achieves even a factor of
two reduction in displacement and sometimes is expensive.  Perhaps the reduction
could be increased by global optimization of the displacement subject to the
separation constraints, but this seems impractical because the constraints have
high dimension, are piecewise polynomial, and are not convex.

The results on tetrahedral meshes suggest that feature separation can play a
role in mesh improvement.  It simultaneously improves the close features by
displacing multiple vertices in a locally optimal manner, whereas prior work
optimizes one vertex at a time.  The tests show that the simultaneous approach
is fast and effective.  It readily extends to control other aspects of the mesh,
such as triangle normals.  Applying the approach to other mesh improvement tasks
is a topic for future work.

\section*{Acknowledgments}

Sacks is supported by NSF grant CCF-1524455.  Milen\-kovic is supported by NSF
grant CCF-1526335.

\bibliographystyle{elsarticle-num}
\bibliography{bib,bib2}

\section*{Vitae}

Sacks is a professor in the Department of Computer Science of Purdue University.
His research interests are Computational Geometry, Computer Graphics, and
Mechanical Design.  Butt is a lead principal engineer at Autodesk.  He received
his PhD from Purdue in 2017 under the supervision of Sacks.  Milenkovic is a
professor in the Department of Computer Science of the University of Miami.  His
interests are Computational Geometry, Packing and Nesting, Graphics, and
Visualization.

\end{document}